\documentclass[10pt, conference]{IEEEtran}
\usepackage[font={small}]{caption}
\usepackage[font={scriptsize}]{subcaption}
\usepackage{times}
\usepackage{flushend}
\usepackage[numbers,sort&compress]{natbib}
\usepackage{array}
\usepackage{footnote}
\usepackage{url}
\usepackage{xspace}

\newcolumntype{R}{>{\centering}X}

\usepackage{clrscode3e}
\usepackage{amsmath,amssymb}
\usepackage{amsthm}
\usepackage{graphicx}
\usepackage{mathtools}
\usepackage{multirow}

\newcommand{\schemelong}[0]{Conceal and Boost Modulation\xspace}
\newcommand{\scheme}[0]{CBM\xspace}

\newcommand{\Hide}[1]{}

\newcommand{\fref}[1]{Fig.~\ref{#1}}
\newcommand{\K}[0]{\mathcal{K}}
\newcommand{\M}[0]{\mathcal{M}}
\newcommand{\N}[0]{\mathcal{N}}

\newtheorem{lemma}{Lemma}

\makeatletter

\def\old@comma{,}
\catcode`\,=13
\def,{%
  \ifmmode%
    \old@comma\discretionary{}{}{}%
  \else%
    \old@comma%
  \fi%
}
\makeatother

\title{\huge\scheme: A Crypto-Coded Modulation Scheme for Rate Information Concealing and Robustness Boosting}
\author{\IEEEauthorblockN{Triet D. Vo-Huu and Guevara Noubir}
\IEEEauthorblockA{College of Computer and Information Science\\
Northeastern University\\
Boston, MA, USA\\
\{vohuudtr, noubir\}@ccs.neu.edu}
}
\begin{document}
\maketitle

\begin{abstract}
Exposing the rate information of wireless transmission enables highly efficient attacks that can severely degrade the performance of a network at very low cost. In this paper, we introduce an integrated solution to conceal the rate information of wireless transmissions while simultaneously boosting the resiliency against interference. The proposed solution is based on a generalization of Trellis Coded Modulation combined with Cryptographic Interleaving. We develop algorithms for discovering explicit codes for concealing any modulation in {BPSK, QPSK, 8-PSK, 16-QAM, 64-QAM}. We demonstrate that in most cases this modulation hiding scheme has the side effect of boosting resiliency by up to 8.5dB.
\end{abstract}

\section{Introduction}
\label{sec:intro}

Wireless communication is the key enabling technology of the Mobile Revolution that we are currently enjoying. Beyond enabling Mobile Phones, Wireless Sensor Networks, and the Internet of Things devices, it is also key to Cyber-Physical Systems such as SCADA Wireless Remote Terminal Units. In general, the convenience of wireless communication is making it the primary technology for connecting to the Internet. Manufacturers of laptops and streaming devices, such as the Apple MacBook Pro and the Roku streaming player, are even removing Ethernet ports and entirely relying on wireless. However, the broadcast nature of the wireless medium makes it vulnerable to two types of major attacks \textit{denial of service}, and \textit{information leakage}. In this paper, we are interested in the impact of exposing the rate information (modulation and coding) on enabling such attacks. In related work~\cite{NoubirRST2011}, Noubir et al. showed that knowledge of the rate used in a transmission enables selective jamming of packets resulting in very efficient attacks on all the Wi-Fi rate adaptation protocols they investigated. They analytically and experimentally demonstrated that smart rate-aware jamming can degrade and maintain the performance of a communication link at its lowest rate, while simultaneously blocking other links communications. They showed that when a link is degraded from 54 Mbps to 1 Mbps, it blocks other devices but also results in higher collisions provoking a long-lasting network-wide congestion collapse. One of the key reasons why such attacks are possible is because the rate information is either explicit (e.g., PSF field of the IEEE 802.11 PLCP header) or implicit (analysis of I/Q modulation). 
	
Designing countermeasures to wireless DoS attacks before they become widespread is very important for both military and commercial applications as \textit{Jamming} represents a real and serious threat. A highly mediatized incident in March 2012~\cite{cnn}, resulted in the FCC  releasing an urgent customer advisory cautioning against the import and use of jamming devices~\cite{fcc2}. Since then, the FCC has stepped up its education and enforcement effort~\cite{FCC-Education-Enforcement} and rolled out a new jammer tip line (1-855-55NOJAM)~\cite{FCC-tip-line}, and issued several fines~\cite{FCC-fine}.

\begin{figure}
\centering
\includegraphics[scale=0.4]{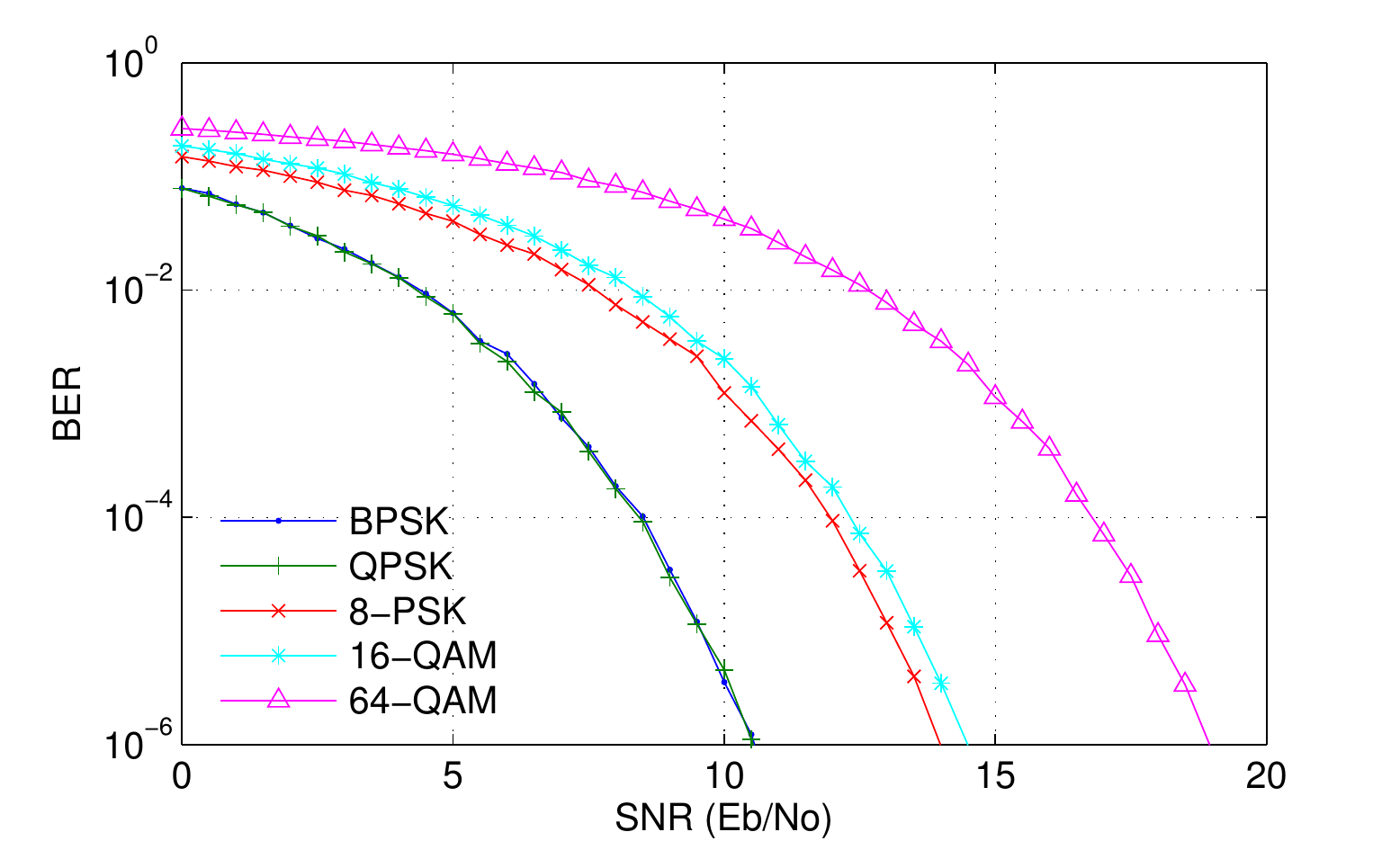}
\caption{Comparison of bit error rates of uncoded modulations.}
\label{fig:uncoded}
\end{figure}

Jamming at the physical layer has been extensively studied (cf.~\cite{poisel2011modern} for a recent monograph on this subject). Reliable communication in the presence of adversaries regained significant interest in the last few years, as new jamming attacks as well as need for more complex applications and deployment environments have emerged (cf.~\cite{jammers,ChenXTZ2009} and references therein). 
Specifically crafted attacks and counter-measures have been studied for packetized wireless data networks~\cite{LinN05,LiKP07,WilhelmMSL11},
multiple access resolution~\cite{BenderFHKL05,BayraktarogluKLNRT2008,antijam-single-hop}, 
multi-hop networks~\cite{XuKWY06,TagueSNP08,LiKP07}, MIMO systems~\cite{KashyapBS2004}, 
and wireless sensor networks~\cite{XuKWY06,XuTZ07,XuTZ08}, spread-spectrum without shared 
secrets~\cite{StrasserCSM08,JinNT09,LiuNDNW2010}.

However, limited work has been done on strengthening rate adaptation algorithms~\cite{GC12,PelechrinisKGB09,FirouzbakhtNS2012}. An unresolved challenge remained to prevent an adversary from guessing the rate (Modulation, Coding information) during the transmission and therefore from selectively interfering with packet. More recently, Rahbari and Krunz proposed a modulation level encryption to hide the rate of communications~\cite{Rahbari:2014:FCM:2627393.2627415}. While this scheme conceals the rate information, it does so at the cost of degrading the robustness of the communication by 1-2dB. 

In this paper, we introduce an integrated solution -- \schemelong{} (\scheme) -- to conceal the rate information of wireless transmissions while simultaneously boosting the communication resiliency against interference. The adversary sees all communications using the highest order modulation. The proposed solution is based on a generalization of Trellis Coded Modulation combined with Cryptographic Interleaving. We developed novel algorithms for efficiently discovering and validating new trellis codes capable of upgrading any modulation constellation to any higher order constellation. We devise explicit codes for concealing any modulation in {BPSK, QPSK, 8-PSK, 16-QAM, 64-QAM}. We demonstrate that in most cases this modulation hiding scheme has the side effect of boosting resiliency by up to 8.5dB and over 9.5 dB in comparison with prior work~\cite{Rahbari:2014:FCM:2627393.2627415}.

A side effect of the proposed mechanisms is that they also help mitigating passive attacks against users traffic analysis and fingerprinting~\cite{AtkinsonARMM13}.

\section{Challenges to Concealing Rate Information}
\label{sec:challenges}

We present the model for the wireless system studied in this work, then  discuss
the threats and challenges to the communications due to exposing the rate information.

\subsection{Model}

\paragraph*{Adversary} The adversary is assumed to be located within the wireless communication range
of the legitimate transmitters and receivers.
Besides, the adversary can carry some signal processing on the received signal to fingerprint and extract rate information.
The adversary can also generate power-limited interference to disrupt the communication.

\paragraph*{Transmitter and receiver} The communication between the transmitter and receiver is 
carried hardware/software implementing a public standard, i.e., transmitted and received on some specific channels
at regulated power levels. This assumption implies that the transmission can be detected,
analyzed and jammed by the adversary.

\paragraph*{System robustness} Wireless communications are typically made robust through the use of multiple Modulation-Coding Schemes (MCS) combined with a rate adaptation algorithm that select the rate as a function of link quality. 


\subsection{Challenges}
Knowing the rate being used by the communication can lead to a very efficient attack for the adversary,
because most of wireless systems can only operate reliably at a bit error rate of $10^{-6}$ or below.
For instance, a TCP packet of typical size 1440 bytes can only be transmitted at
a success probability of $99\%$ if the bit error rate of the channel is under $10^{-6}$.
At the bit error rate $10^{-6}$, a communication using 64-QAM modulation requires the transmitter
to transmit at a power $18$dB higher than the noise level (as shown in~\fref{fig:uncoded}).
This implies that an adversary only needs to use a jamming power of about $18$dB ($60$ times) lower than the transmitter's power
to make the communication unreliable. In contrast, a BPSK communication requires a stronger
adversary to achieve the same jamming impact. Smart adversaries
have been built to efficiently jam the wireless link based on the knowledge of the transmission rate
(see~\cite{NoubirRST2011} for an example of adaptive jamming in IEEE 802.11 networks).
Therefore, concealing the rate information is crucial for the robustness of practical wireless systems,
and is the focus of this work. In the following paragraphs, we show that without specific techniques by design, an adversary can easily indentify the rate information and launch efficient attacks. Note that straightforward solutions do not work, e.g., encryption cannot be done before the channel coding, and straightforward modulation upgrade with coding compensation is not efficient (more discussion in Section~\ref{sec:approach}).

\paragraph{Explicit rate information}
In many communication protocols, the rate information of a transmission is unprotected.
For instance in IEEE 802.11 networks, the rate is explicitly specified in the
PLCP Signaling Field (PSF) of the physical layer's frames. Similarly in LTE cellular systems, the transmission rate
is specified in the Modulation and Coding Scheme (MCS) field within the Downlink Control Information (DCI), which is
itself encoded using a publicly known fixed rate $1/3$ convolutional code and
QPSK modulation. An adversary can easily synchronize with the  communication
between two parties, analyze the data frames and extract the rate.
This attack is very practical as demonstrated by~\cite{NoubirRST2011}.

\paragraph{Constellation-based guessing}
Even if the rate information is not explicitly provided within the packet header, the adversary can analyze the received signal in complex I/Q form. After
performing the carrier synchronization, frequency and phase offset correction, the adversary
can trace the received constellation pattern and determine the modulation and coding scheme in use (\fref{fig:constel-attack}).
Since most communication standards specify a limited set of modulations and codes,
the rate information can be efficiently guessed by trial and error.
This method does not require the knowledge of the protocol's frame structure.
A practical rate-aware jammer can easily be built on a software-defined radio (e.g., USRP~\cite{Usrp})
by processing the received samples of the transmitted signal, obtaining the rate, and jamming in real time~\cite{WilhelmMSL11,NoubirRST2011}.
Based on the sophistication of guessing, we classify the method into two subcategories:
\emph{modulation guessing} and \emph{code guessing}, in which the latter requires more complex techniques.

\begin{figure}[tb]
\centering
\begin{subfigure}{0.2\textwidth}
\centering
    \includegraphics[scale=0.6]{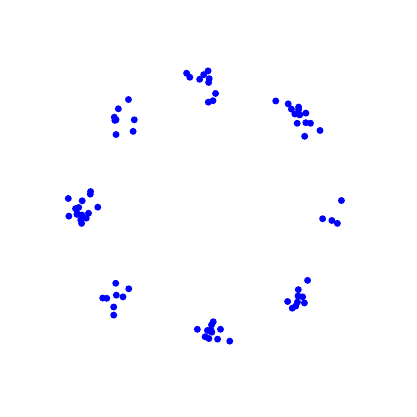}
    \caption{8-PSK}
\end{subfigure}
\quad
\begin{subfigure}{0.2\textwidth}
\centering
    \includegraphics[scale=0.6]{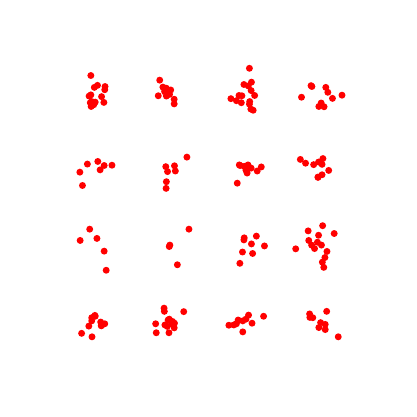}
    \caption{16-QAM}
\end{subfigure}
\caption{Adversary can record the signal and guess the modulation
based on the constellation pattern.}
\label{fig:constel-attack}
\end{figure}

\section{Approach}
\label{sec:approach}
In this section, we discuss the main ideas and mechanisms of our integrated solution that can not only hide the rate information,
but also increase the robustness of the communication against interference.
Our scheme -- \schemelong (\scheme) -- is depicted in~\fref{fig:approach}.
The General Trellis Coded Modulation (GTCM) module's functionalities are two-fold. First, it makes the constellation pattern indistinguishable to the adversary,
therewith countering the constellation-based modulation-guessing. Second, it boosts the system resiliency against interference.
The Cryptographic Interleaving (CI) module conceals the rate information from explicit exposing and implicit constellation-based code-guessing.

\begin{figure}
\centering
\includegraphics[scale=0.35]{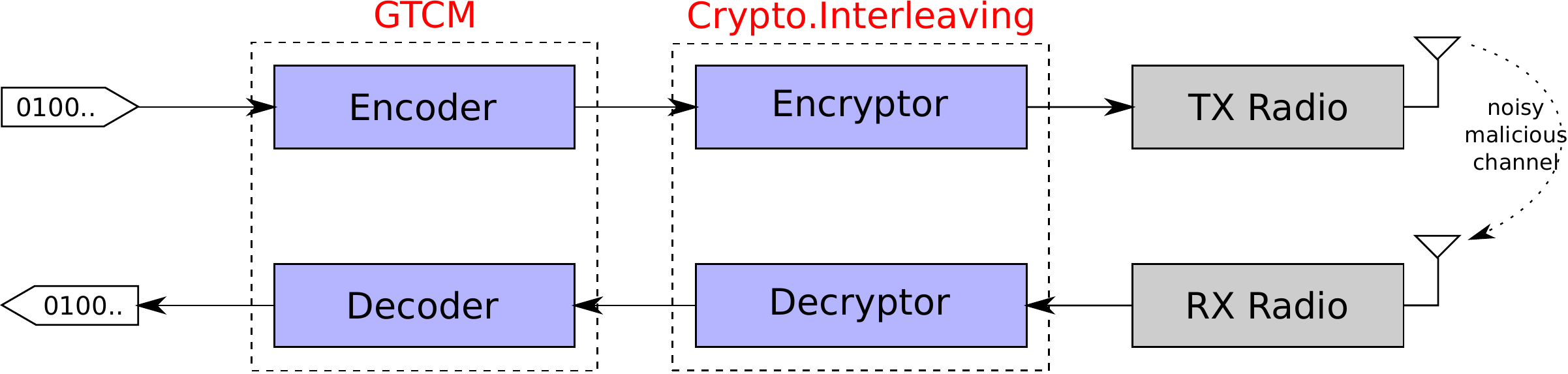}
\caption{Our CBM system comprises General Trellis Coded Modulation and Cryptographic Interleaving modules.}
\label{fig:approach}
\end{figure}

Our idea for hiding the constellation is to
always use a \emph{single unifying} modulation (the highest order) to transmit data in order to create
the same constellation observed by the adversary. To preserve the robustness specified by the
original modulation, the GTCM module encodes the data by a suitable code of rate
matching the bit rate ratio between the original modulation and the target modulation.
To be precise, let's consider a system that
supports a set of $\N$ different modulations ordered by the number of bits per symbol
$b_1\le{}\ldots\le{}b_{\N}$.
Assume that the transmission is carried at a bit rate $k=b_{\K}$ bits per symbol for some modulation $\K$.
In order to conceal the constellation, we transmit using the highest-order modulation $\N$ of bit rate $n=b_{\N}$ as
the target modulation, then encode the data using an adequate code of rate $k/n$
and transmit the encoded data using modulation $\N$.
For example, in a system that supports $\N=4$ different modulations \{BPSK, QPSK, 8-PSK, 16-QAM\},
any transmission using a modulation different than 16-QAM will be encoded and
transmitted with 16-QAM modulation. 
Since the adversary will always observe the same constellation, the GTCM module hides the
actual rate from constellation-based modulation-guessing attacks. Moreover, we argue that
with coding, the robustness of the system can also be improved.

To counter the constellation-based code-guessing, we harden the system using the CI module,
which interleaves the modulated symbols before transmission. We emphasize that the interleaving
process is performed at the baseband samples level, i.e., complex symbols produced by the GTCM module are interleaved 
per block of transmitted symbols. We note that straightforward encryption of data
before the GTCM processing does not conceal the rate information.
For example, the adversary can try to decode the data iterating over all the possible codes.
During the decoding phase, the likelihood of decoding each possible sequence
is recorded and evaluated. The code corresponding to the maximum likelihood
will be the one used by the transmission.
The attack is based on the fact that output sequences of different codes are
not identically distributed in the output stream.
This requires that we design an interleaving mechanism based on cryptographic functions in order to
make the interleaved symbol sequence indistinguishable to the adversary.
We derive a specific method to efficiently generate interleaving functions
used for permuting the output symbols from the GTCM module in such a way that
the transmit stream does not leak the rate information.
For the receiver to be able to decode the data, the rate information is embedded into
the packet in an encrypted form such that only the receiver, who shares the secret key
with the transmitter, can decrypt the information.

It is important to understand the implications of rate hiding.
On one hand, the highest-order modulation creates redundancy by the constellation expansion.
On the other hand, the constellation points' pair-wise distances are closer than in the original constellation.
Without good design specifically targeting to the upgraded modulation's constellation, the system can become
less resilient against interference.
For example, the modulation unification technique used in~\cite{Rahbari:2014:FCM:2627393.2627415}
results in the system robustness 1-2dB less than regular rate-exposing systems.
The reduced resiliency is because no coding is used in their system.
However, even using good traditional binary codes can  not guarantee
the robustness of the system because they maximize the Hamming distance between codewords and are not designed for coded modulation. An illustration is shown in \fref{fig:qpsk-qam16-v6}. We take the best code
$\scriptsize\left(\begin{array}{cccc}
17 & 13 & 05 & 02 \\
10 & 03 & 17 & 15
\end{array}\right)$ of rate $2/4$ from Table VII in~\cite{conv-codes}, and use it with Gray coded 16-QAM modulation.
Comparing it with our derived TCM code
$\scriptsize\left(\begin{array}{cccc}
01 & 12 & 16 & 11 \\
01 & 13 & 16 & 11
\end{array}\right)$ of the same rate and constraint, we see a gain of about 4dB is achieved with our code,
while the binary code almost gives no advantage over uncoded QPSK at $\textrm{BER}=10^{-6}$.
Therefore, with good codes designed for the target modulation, we can gain instead of losing.
As seen later in Section~\ref{sec:gtcm}, increasing the code's constraint length can boost the
resiliency up to 8.5dB.

\begin{figure}
\centering
\includegraphics[scale=0.4]{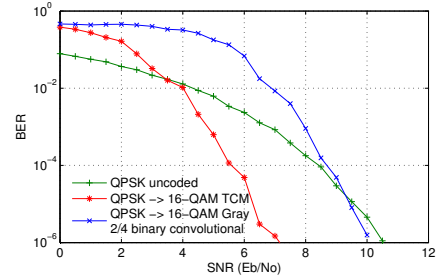}
\caption{Performance comparison between
(1) our TCM code with standard 16-QAM, and
(2) best traditional binary code (from~\cite{conv-codes}) of rate $2/4$ with Gray coded 16-QAM.}
\label{fig:qpsk-qam16-v6}
\end{figure}

Searching for good codes for the rate-hiding systems must take
into account the constellation description defined by the highest-order modulation.
This idea is rooted in the Trellis Coded Modulation (TCM) technique introduced by Ungerboeck~\cite{tcm}, who focused on devising modulation codes of rate $k/(k+1)$. 
In the literature, finding good TCM codes is a challenging problem, for which
only heuristic solutions have been studied such as the set partitioning rules established
in~\cite{tcm}. Unfortunately, a full theoretical analysis is not
yet known for good TCM code construction. In this work, we introduce a new
heuristic approach for upgrading arbitrary modulations. Our heuristic solution is not based on
the conventional set partitioning concept. Instead, we generate the code based on
the general structure of a convolutional code. Specifically, we randomize the mapping between the
inputs, shift registers, and the outputs. Our experimental results show that
this approach can find codes at least as good as the ones found in~\cite{tcm}. 
Yet in some cases, we obtain better codes than those in~\cite{tcm}.
Our work is summarized in the following main points:
\begin{itemize}
\item We develop a set of new algorithms that efficiently discovers good modulation codes of general rate $k/n$. Such codes are not restricted to be uniform as in previous work. Such algorithms include a new technique to efficiently compute the free distance of non-uniform codes.
\item We constructed efficient codes of constraint length up to 10 used for any pair of
modulations in \{BPSK, QPSK, 8-PSK, 16-QAM, 64-QAM\}. For the case of the rate ~$k/(k+1)$ considered by traditional TCM, we found codes that perform better than the ones introduced in~\cite{tcm}.
\item We design a generating method that can
 efficiently produce fast interleaving functions used for concealing the rate information.
 
 \item We evaluate the performance of each constructed code demonstrating a robustness boosting up to 8.5dB and an improvement over related rate hiding techniques of up to 10dB.
\end{itemize}


\section{General Trellis Coded Modulation}
\label{sec:gtcm}
In this section, we decribe the searching procedure for TCM codes
of general rate $k/n$, which are used to encode data
originally modulated by a modulation $\K$ of order $2^k$ to the highest-order
modulation $\N$ of order $2^n$. For convenience, we give a brief overview
on TCM codes.

A TCM code is a convolutional code $(k,n)$ defined by a set of $k$ shift registers
storing the code's $k$ input bits, and a generator matrix which specifies the input-output mapping (e.g.,~\fref{fig:code-example}).
The longer the shift registers are, the more redundancy the codes can have. Thus, we classify
the codes by their constraint length $v$ defined as
$v=\sum_{i=1}^k{}v_i$,
where $v_i$ is the length of the $i$-th shift register.
In our search procedure, we only consider the feed-forward construction of convolutional codes,
because any construction with feedback can be transformed into
a feedback-free construction that produces equivalent codewords~\cite{ecc}.
To represent a code, we use the conventional generator
polynomial form $G(D)=\{g_{ij}(D),i=0\ldots{}k-1,j=0\ldots{}n-1\}$,
where $g_{ij}(D)=\sum_{l=0}^{v_i}a_{l}D^l$ is a univariate polynomial,
and the indeterminate $D$ represents the delay of the input bit in the corresponding shift register.
If $a_l=1$, the $i$-th input's
current value (for $l=0$) and past values (for $l>0$) are mod-2 added (exclusive-or) to the $j$-th output.
For example, the convolutional code $(2,4)$ in~\fref{fig:code-example} has the
generator polynomial
$G=\left[\begin{array}{cccc}D+D^2 & D^3 & D+D^3 & 1+D+D^2 \\
0 & 1 & 0 & 1+D
\end{array}\right]$.

Different than binary convolutional codes whose performance depends on the Hamming distance
of the binary output symbols, TCM codes' performance is determined by the
free \textit{Euclidean} distance $d^{\infty}$, which is the minimum Euclidean distance of any
two complex symbol sequences produced by the code and modulation $\N$.
Since binary codes are not designed for coded modulation, they do not take into account 
the constellation mapping. The best binary code with optimized Hamming distance can have
arbitrarily small Euclidean distance between transmitted complex symbols and result
in low performance when combined with a specific modulation (e.g., \fref{fig:qpsk-qam16-v6}).
In the search for good TCM codes, we use as comparison metrics
the asymptotic coding gain ratio measured by $\beta=d_{\N}^\infty/\Delta_{\K}$, where
$\Delta_{\K}$ is the minimum Euclidean distance between constellation points in the original modulation $\K$.
Good TCM codes must have high $\beta$ ratio.

\begin{figure}
\centering
\includegraphics[scale=0.3]{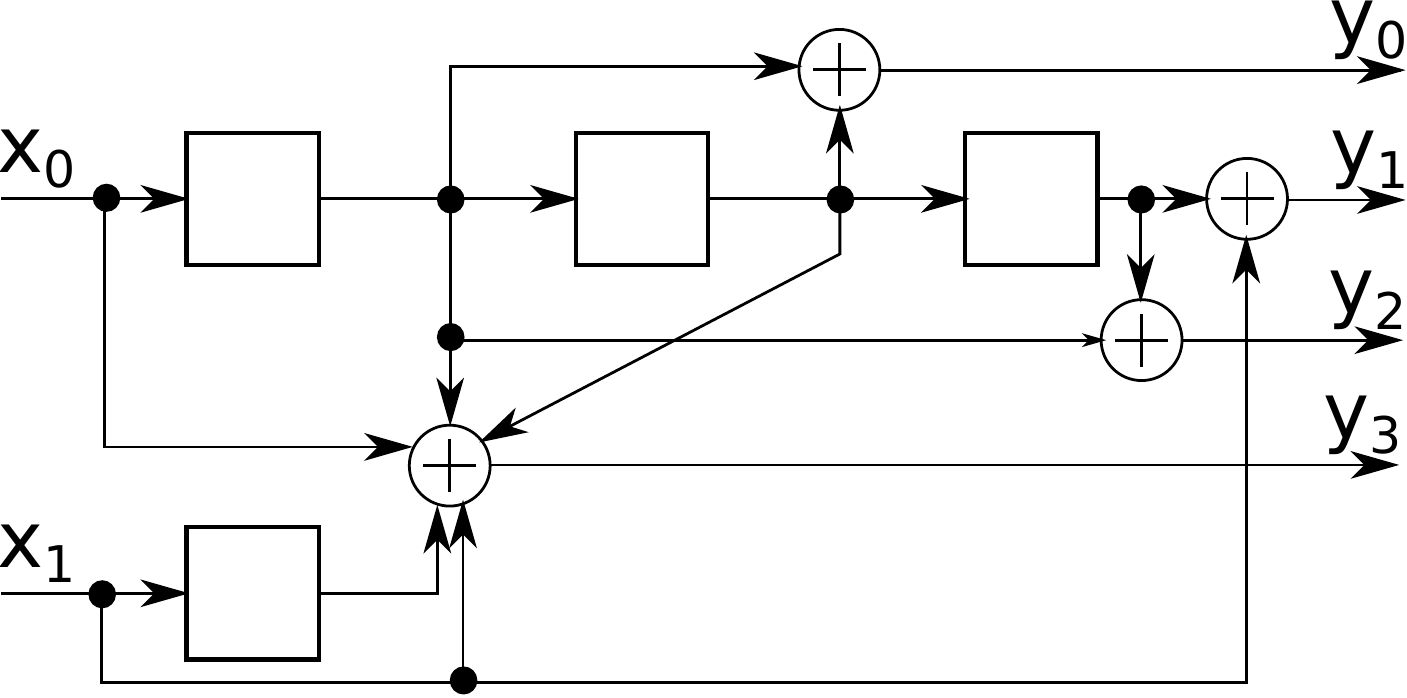}
\caption[example]{Our best TCM code of rate $2/4$ (QPSK $\rightarrow$ 16-QAM) and constraint length $4$.
Its boosting gain over uncoded QPSK is $3.8$dB.} 
\label{fig:code-example}
\end{figure}

\subsection{Code search algorithm}
We introduce a new heuristic approach for searching for good TCM codes. The idea of the algorithm
is that for a given code specification $(k,n,\{v_i\})$, the coefficients of the
generator polynomials $g_{ij}$ are randomized between values $0$ and $1$.
Since each coefficient change corresponds to a new code construction,
we check the generated code for the free Euclidean distance.
The search is performed for a fixed number of trials independent of
the code specification, thus it is extremely faster than a full search which
evaluates all possible codes. Nevertheless, as shown in Section~\ref{sec:search-results},
our randomization approach can achieve the same results as a full search.
The search procedure is illustrated in the $\proc{RandomCodeSearch}$ algorithm below.

\begin{codebox}
\Procname{$\proc{RandomCodeSearch}(k,n,\{v_i\},\M,T)$}
\li $d^{\infty}=0$ \>\>\>\> \Comment free distance of current best code
\li \For $i \gets 1$ \To $T$
\li \Do
        $C \gets \func{generateCode}(k,n,\{v_i\})$
\li     \If $\func{valid}(C)$ \>\>\> \Comment non-catastrophic and equiprobable
\li     \Then $d \gets \proc{ComputeDistance}(C,\M,d^{\infty})$
\li         \If $d>d^{\infty}$
\li         \Then
                $d^{\infty} \gets d$  \>\>\>  \Comment update free distance
\li             $C^* \gets C$        \>\>\> \Comment store new best code
            \End
        \End
    \End
\li \Return $(C^*,d^{\infty})$
\end{codebox}

Our random search is characterized by the number of trials $T$
performed by \proc{RandomCodeSearch}. 
Since the code is randomly generated, it might be
(1) \emph{catastrophic}, i.e., there exists a non-zero input sequence that can
produce all-zero output sequence; and (2) \emph{non-equiprobable}, i.e., the output values
are not uniformly distributed, which can help the adversary distinguish the mapping we aim to conceal.
Therefore, the generated code is first validated to be non-catastrophic and equiprobable, then its free distance $d^{\infty}$ is computed.


\subsection{Free distance computing algorithm}
The computational bottlneck of the code search is computing the Euclidean free distance,
since it is performed for every generated code.
In the conventional TCM code construction method based on set partitioning rules~\cite{tcm},
computing the free distance only involves finding the minimum distance to the all-zero sequence.
However, our random search does not apply the set partitioning concept as
we aim to search in a higher-dimension space so that better codes can be found (including non-uniform ones).
As a result, computing free Euclidean distance has to involve
all pairs of output sequences. Nevertheless, as shown later, we devise an efficient algorithm
-- \proc{ComputeDistance} --
whose running time is on average less than $2$ms on a 3GHz CPU computer for the modulations and depths we consider.

The main idea of our $\proc{ComputeDistance}$'s algorithm is based on
traversing the trellis of the code and appropriately updating the
\emph{state-distances}, which we define shortly below.

First, we introduce some convenient notations. Let $I=\{0,\ldots,2^{C.k}-1\}$ be the
set of inputs, $O=\{0,\ldots,2^{C.n}-1\}$ the set of output symbols, and
$\Lambda=\{0,\ldots,2^{C.v}-1\}$ the set of possible states corresponding
to a code $C$. A path $P$ of length $L$ is defined as a sequence of $3$-tuples
$P=\{(S_i,x_i,y_i),i=0\ldots{}L-1\}$, where $S_i\in\Lambda,x_i\in{}I$ are respectively
the state and input of the code at time $i$, and $y_i\in{}O$ is the output symbol due to $S_i$ and $x_i$.
The distance between two paths $P$ and $\tilde{P}$ of length $L$ is computed by
$\func{dist}(P,\tilde{P})=\sum_{i=0}^{L-1}\attrib{\M}{ed}(\attrib{P}{y},\attrib{\tilde{P}}{y})$,
where $\attrib{\M}{ed}(a,b)$ gives the Euclidean distance between two points $a$ and $b$
on the target coded modulation $\M$'s constellation.
Now we define the state-distance $D[S,\tilde{S}]\stackrel{\Delta}{=}\min\{\func{dist}(P,\tilde{P})\}$ of two states $S$ and $\tilde{S}$
as the minimum Euclidean distance between all possible paths of the same
length that end at state $S$ and $\tilde{S}$, respectively.

The main idea of the algorithm is that we update $D[S,\tilde{S}]$
gradually when traversing the trellis with increasing $L$. When two paths $P$ and $\tilde{P}$
merge, i.e., $\attrib{P}{S_{L-1}}=S=\tilde{S}=\attrib{\tilde{P}}{S_{L-1}}$,
the free distance $d^{\infty}$ is checked and updated with $D[S,\tilde{S}]$.

\begin{codebox}
\Procname{$\proc{ComputeDistance}(C,\M,d^{\infty}_{best})$}
\li $D[S,\tilde{S}] \gets \infty$ for all $(S,\tilde{S})\in{}V^2$ \>\>\>\>\>\>\>\hspace{1ex}   \Comment state-distances    \label{li:start-init}
\li $d^{\infty} \gets \infty$    \>\>\>\>\>\>\>\hspace{1ex} \Comment $C$'s free distance
\li \For \kw{each} $S\in{}\Lambda$, $(x,\tilde{x})\in{}I^2$, $x\neq{}\tilde{x}$                  \label{li:diverse}
\li \Do
        $\proc{UpdateDistance}(S,x,S,\tilde{x})$                                                 \label{li:update1}
    \End                                                                                           \label{li:end-init}
\li \Repeat                                                                                        \label{li:start-loop}
\li     \For \kw{each} $(S,\tilde{S})\in{}\Lambda^2,S\neq{}\tilde{S},D[S,\tilde{S}] < d^{\infty}$    \label{li:for}
\li     \Do
	            \For \kw{each} $(x,\tilde{x})\in{}I^2$                                            \label{li:newsegment}
\li             \Do
                    $\proc{UpdateDistance}(S,x,\tilde{S},\tilde{x})$
\li                             \If $d^{\infty} \le{} d^{\infty}_{best}$                            \label{li:checkbest}
\li                             \Then
                                    \Return $d^{\infty}$
                                \End
	            \End
	    \End
\li \Until $(S,\tilde{S})$ not found in line~\ref{li:for}                                          \label{li:end-loop}
\li \Return $d^{\infty}$
\end{codebox}

The algorithm $\proc{ComputeDistance}$ starts by initializing the 
state-distances to the distance between any path $P$ and $\tilde{P}$
starting from any \emph{same} state $S$ (line \ref{li:start-init}--\ref{li:end-init}).
We make the paths diverge from the same state (line~\ref{li:diverse}),
then compute the distance between them (line~\ref{li:update1}).
In the main loop (line \ref{li:start-loop}--\ref{li:end-loop}),
the state-distances are repeatedly updated for each new segment added (line~\ref{li:newsegment})
to the paths until there exist no more state pairs $(S,\tilde{S})$
whose state-distance $D[S,\tilde{S}]$ is less than $d^{\infty}$ (line~\ref{li:end-loop}).
The maintenance and update of state-distances in both the initialization and the main loop
are performed by
$\proc{UpdateDistance}$, which keeps records of $D[S,\tilde{S}]$
for all $S,\tilde{S}$. Whenever two paths $P$ and $\tilde{P}$ merge at a state $S$,
the corresponding state-distance $D[S,S]$ is checked to update $d^{\infty}$.

\begin{codebox}
\Procname{$\proc{UpdateDistance}(S,x,\tilde{S},\tilde{x})$}
\li     $T \gets \attrib{C}{nextState}(S,x)$; $y \gets \attrib{C}{output}(S,x)$
\li     $\tilde{T} \gets \attrib{C}{nextState}(\tilde{S},\tilde{x})$; $\tilde{y} \gets \attrib{C}{output}(\tilde{S},\tilde{x})$
\li     \If $S=\tilde{S}$
\li     \Then $d \gets \attrib{\M}{ed}(y,\tilde{y})$   \>\>\>\>\>\> \Comment for initialization
\li     \Else
\li         $d \gets \attrib{\M}{ed}(y,\tilde{y}) + D[S,\tilde{S}]$   \>\>\>\>\>\> \Comment for main loop
\li     \End
\li     \If $d < D[T,\tilde{T}]$
\li     \Then
	        $D[T,\tilde{T}] \gets d$    															\label{li:update-state-distance}
\li         \If $d < d^{\infty}$ \kw{and} $T=\tilde{T}$ \>\>\>\>\>\> \Comment two paths merge
\li         \Then $d^{\infty} \gets d$
            \End
        \End
\end{codebox}

Recall the random search procedure discussed previously where each generated code is
computed for the free distance, we speed up the search by storing the best free distance $d^{\infty}_{best}$
associated to the best code $C^*$ discovered so far in order to quickly eliminate
the code of free distance shorter than $d^{\infty}_{best}$
(line~\ref{li:checkbest} in \proc{ComputeDistance}).

\begin{figure}
\centering
\includegraphics[scale=0.12]{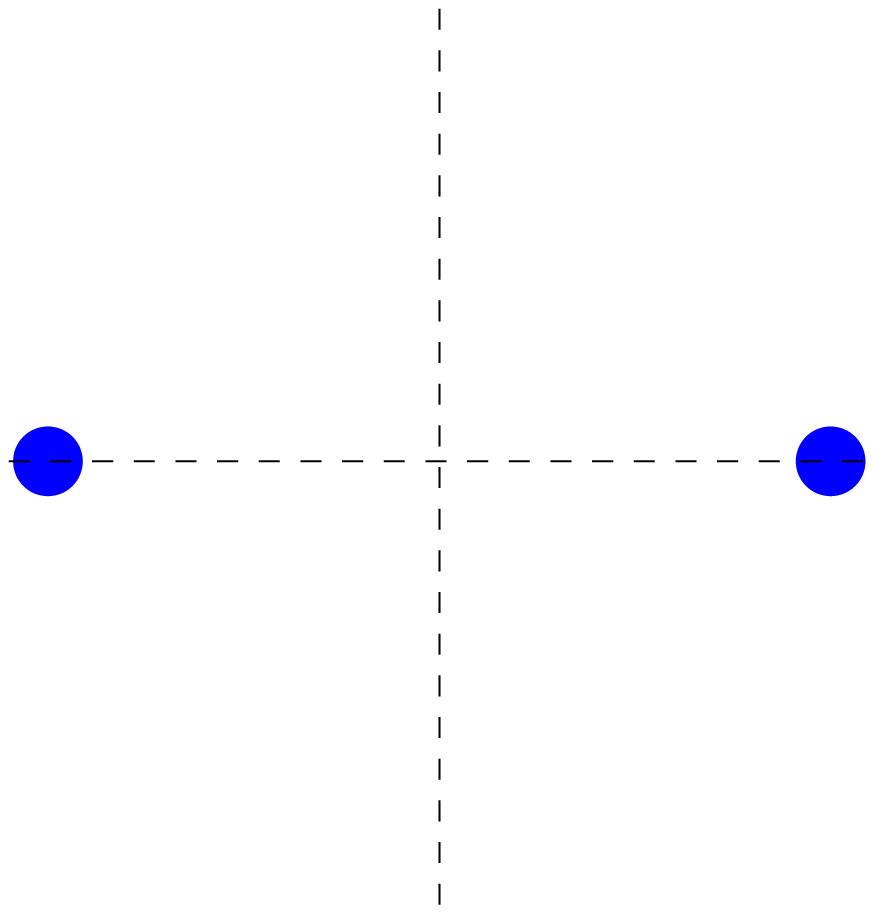}
\includegraphics[scale=0.12]{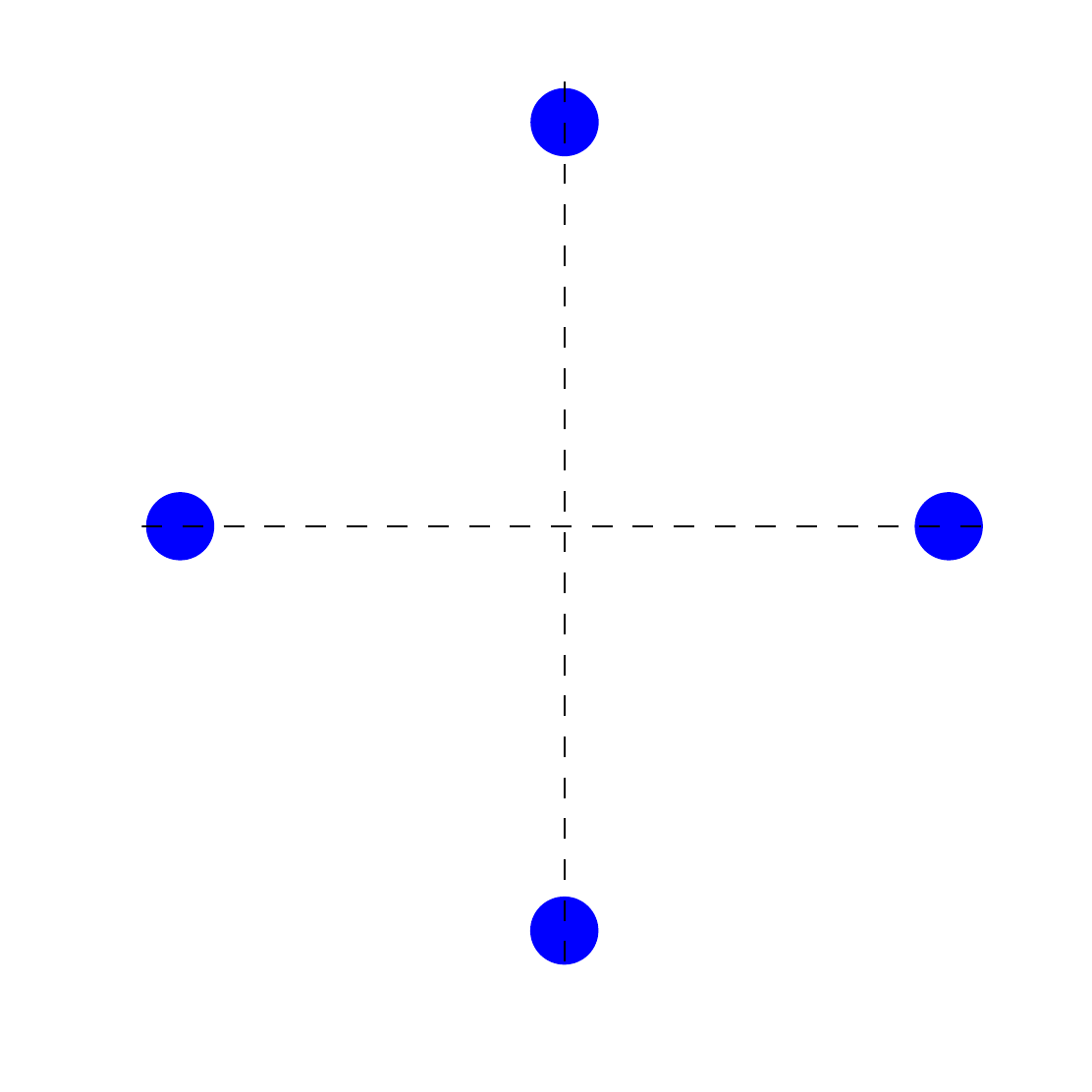}
\includegraphics[scale=0.12]{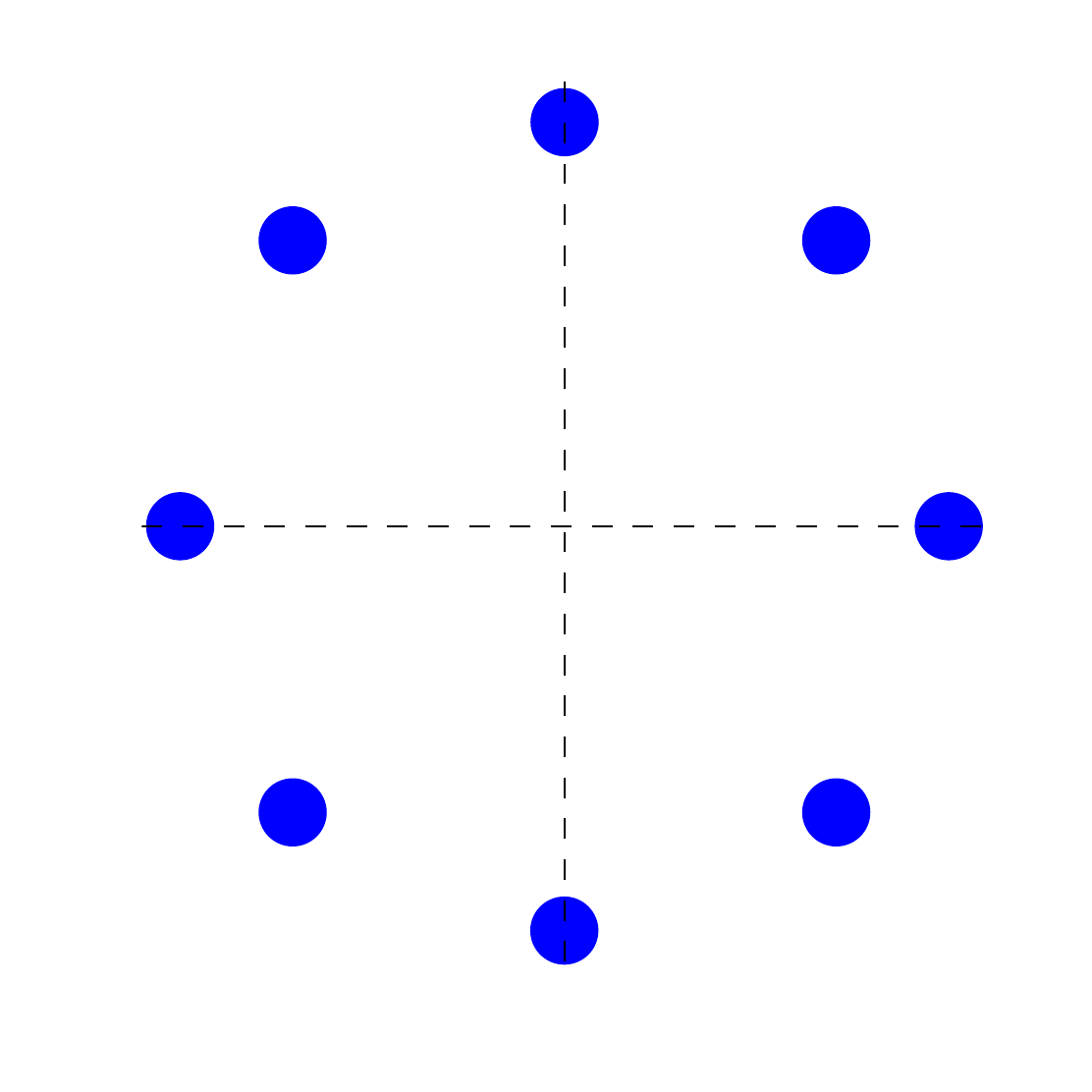}
\includegraphics[scale=0.12]{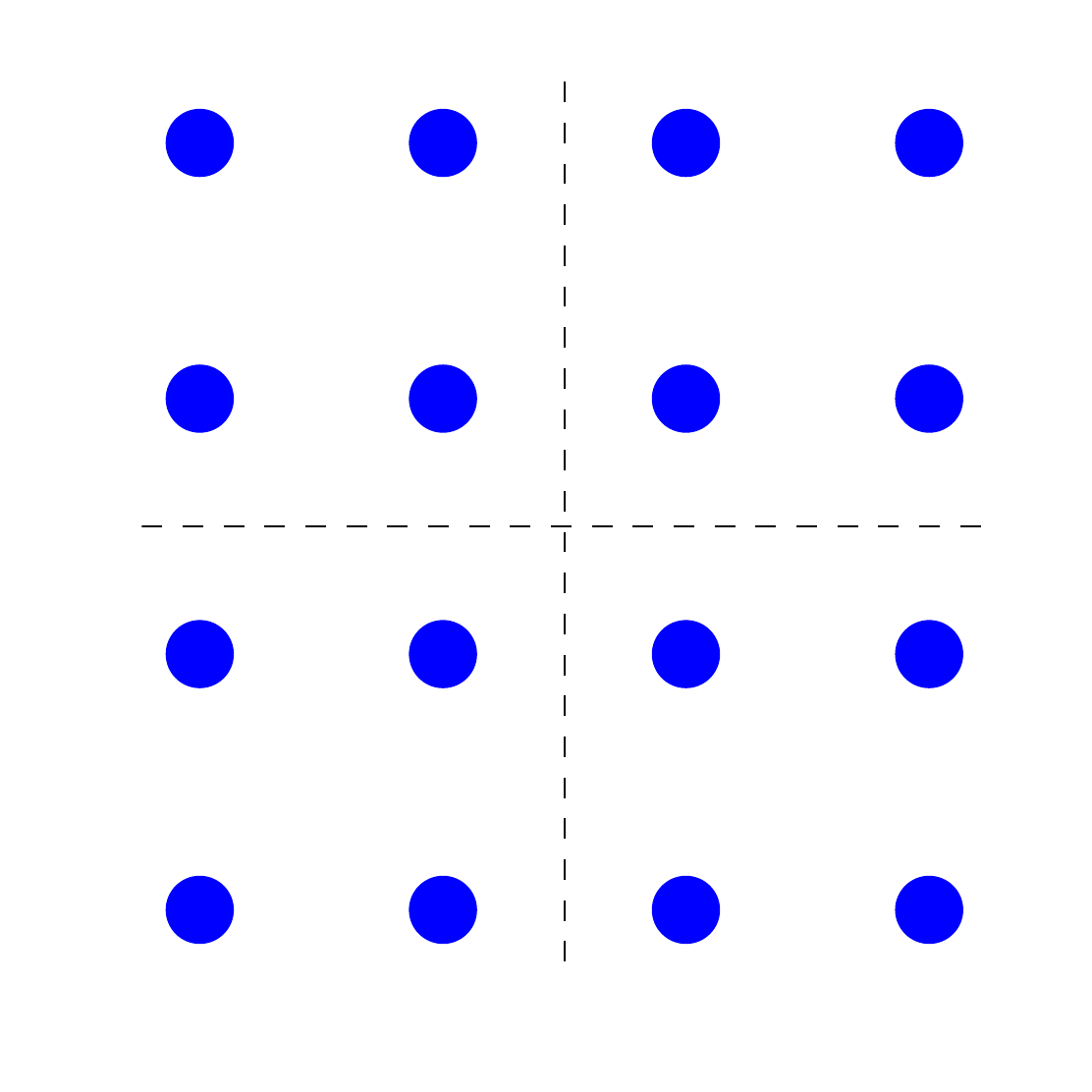}
\includegraphics[scale=0.12]{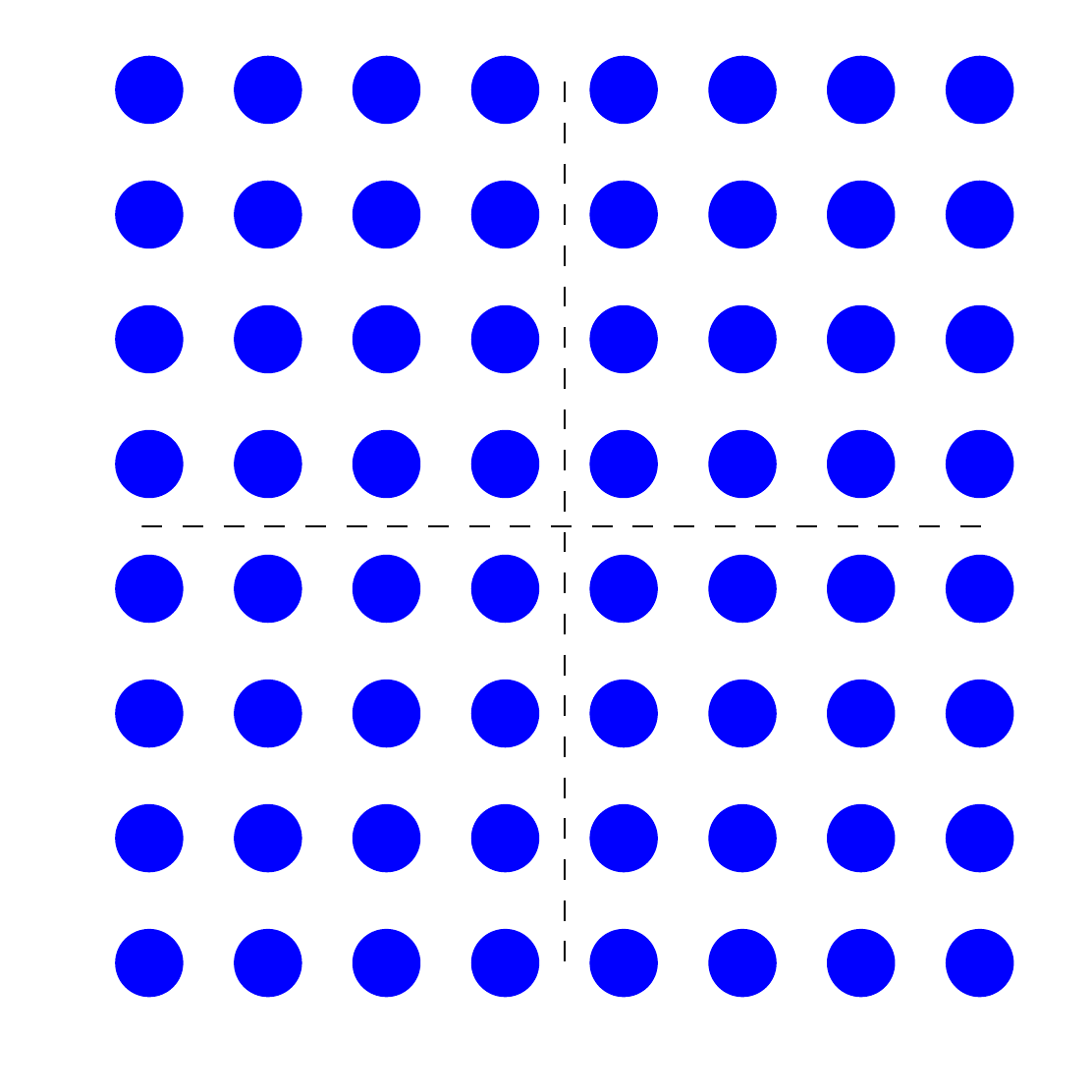}
\caption{BPSK, QPSK, 8-PSK, 16-QAM, 64-QAM constellations.}
\label{fig:constellations}
\end{figure}

\vskip 1ex\noindent{\bf Correctness}
To prove the correctness of the algorithm, we show that 
the state-distances $D[S,\tilde{S}]$ keep records of the distances of all possible
``close'' paths. The proof is based on the following lemma.

\begin{lemma}
\label{lemma1}
At any time $i$ on the code trellis, for any pair of paths $P$ and $\tilde{P}$,
if there exists another pair of paths $Q$ and $\tilde{Q}$ such that 
$P.S_i=Q.S_i$, $\tilde{P}.S_i=\tilde{Q}.S_i$, and $D[Q,\tilde{Q}]<D[P,\tilde{P}]$,
then $P$ and $\tilde{P}$ can be eliminated.
\end{lemma}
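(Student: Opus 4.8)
The plan is to derive the lemma from two elementary properties of the code trellis: the additivity of the Euclidean path metric $\func{dist}$, and the fact that the set of admissible continuations of a path depends only on the path's current state (the trellis is memoryless). First I would fix notation for concatenation: if $P$ is a path of length $i$ ending in state $P.S_i$ and $R$ is any path of length $m$ whose first state equals $P.S_i$ and whose transitions are consistent with the generator matrix of $C$, write $P\|R$ for the length-$(i+m)$ path obtained by gluing them. Directly from the definition $\func{dist}(P,\tilde P)=\sum_l \attrib{\M}{ed}(\attrib{P}{y_l},\attrib{\tilde P}{y_l})$ one gets the additivity identity $\func{dist}(P\|R,\tilde P\|\tilde R)=\func{dist}(P,\tilde P)+\func{dist}(R,\tilde R)$ for any compatible $R,\tilde R$.

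Next I would observe that, because $P.S_i=Q.S_i$ and $\tilde P.S_i=\tilde Q.S_i$, a pair $(R,\tilde R)$ is a compatible continuation of $(P,\tilde P)$ if and only if it is a compatible continuation of $(Q,\tilde Q)$; in particular the two prefix pairs admit exactly the same continuations, and a continuation makes $P\|R$ and $\tilde P\|\tilde R$ merge at some later time $j$ (reach a common state) precisely when it makes $Q\|R$ and $\tilde Q\|\tilde R$ merge at time $j$. Recall that $d^\infty$ is by definition the minimum of $\func{dist}$ over all pairs of equal-length paths that start from a common state and later merge; hence every merging path pair that ``uses'' the prefix $(P,\tilde P)$ has the form $(P\|R,\tilde P\|\tilde R)$ for a merging continuation $(R,\tilde R)$.

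Combining the two, and writing $\func{dist}(P,\tilde P)$ for what the statement denotes $D[P,\tilde P]$ (the accumulated distance of the prefix pair, and similarly for $Q,\tilde Q$), for every such continuation we have
\begin{align*}
\func{dist}(Q\|R,\tilde Q\|\tilde R) &= \func{dist}(Q,\tilde Q)+\func{dist}(R,\tilde R) \\
&< \func{dist}(P,\tilde P)+\func{dist}(R,\tilde R)=\func{dist}(P\|R,\tilde P\|\tilde R).
\end{align*}
Thus every merging path pair that extends $(P,\tilde P)$ is strictly dominated by one that extends $(Q,\tilde Q)$, so removing all extensions of $(P,\tilde P)$ from consideration does not change the minimum defining $d^\infty$; that is, $P$ and $\tilde P$ can be eliminated. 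The degenerate case where $P$ and $\tilde P$ already share a state at time $i$ (an immediate merge, continuation of length $0$) is covered by the same inequality with $\func{dist}(R,\tilde R)=0$.

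I expect the only real subtlety to be making the pruning \emph{globally} consistent rather than merely locally valid: eliminating $(P,\tilde P)$ in favour of $(Q,\tilde Q)$ is sound only provided $(Q,\tilde Q)$ — or, after further rounds of pruning, some surviving representative of the state pair $(P.S_i,\tilde P.S_i)$ with no larger accumulated distance — is itself retained. This is exactly what the main loop of \proc{ComputeDistance} guarantees by always storing in $D[S,\tilde S]$ the smallest accumulated distance seen so far for $(S,\tilde S)$, so the clean way to finish is to state the lemma's consequence as an invariant maintained by \proc{UpdateDistance} and prove it by induction on the trellis depth, with the displayed inequality as the inductive step. The remaining points — that compatible continuations leading to a merge exist at all (the non-catastrophic, equiprobable validity of $C$ ensures the trellis is well enough connected for this), and that the strictness of the hypothesis $D[Q,\tilde Q]<D[P,\tilde P]$ is what licenses dropping $(P,\tilde P)$ outright instead of keeping a tie-break copy — are routine.
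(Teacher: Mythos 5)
Your proof is correct and follows essentially the same dominance argument as the paper's (admittedly terse) sketch: continuations of $(P,\tilde P)$ and $(Q,\tilde Q)$ coincide because the terminal states coincide, and additivity of the Euclidean path metric then propagates the strict inequality $D[Q,\tilde Q]<D[P,\tilde P]$ to every extended merging pair. Your added remarks — making the additivity and continuation-bijection explicit, handling the length-$0$ continuation, and flagging that soundness of the pruning requires the surviving representative to actually be retained by \proc{UpdateDistance} — fill in details the paper leaves implicit but do not constitute a different route.
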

\begin{proof}[Proof (sketch)]
By the lemma's assumption, $P$ merges with $Q$ and $\tilde{P}$ merges with $\tilde{Q}$ at time $i$.
It is followed that at time $i+1$, any new pair evolved from $P$ and $\tilde{P}$ will find a similar new pair
evolved from $Q$ and $\tilde{Q}$. Therefore, the pair $(P,\tilde{P})$ cannot have
shortest distance.
\end{proof}

\noindent\emph{Proof of correctness.} At initialization of \proc{ComputeDistance}, $D[S,\tilde{S}]$ are set to non-infinity values
only for pairs of paths starting from the same state (line~\ref{li:diverse}).
This means that $D[S,\tilde{S}]$ properly reflect the distances of paths at initial states.
In the main loop, the algorithm traverses every transition of the trellis
and updates the state-distances.
By Lemma~\ref{lemma1}, the macro $\proc{UpdateDistance}$ will discard
paths corresponding to greater distance $D[S,\tilde{S}]$ and keep
the ones corresponding to the shortest distance so far (line~\ref{li:update-state-distance}).
Therefore, no closest pairs are eliminated by the algorithm.

To see that the algorithm terminates, we show that there exists a time such that 
$D[S,\tilde{S}]\ge{}d^{\infty}$ for all state pairs. It is enough to show
that $D[S,\tilde{S}]$ are increasing while $d^{\infty}$ is decreasing. The former is
correct because evolving paths always contain transitions that results in
positive increment in distance. The latter is due to the update in \proc{UpdateDistance}.
This concludes the proof.

\vskip 1ex\noindent{\bf Computational Complexity}
The time complexity $g(t)$ of \proc{ComputeDistance} depends on the length $L$
of the paths where the free distance is found. We estimate $g(t)$ in the worst case as follows.
First, the initialization of $D[S,\tilde{S}]$ requires $2^{C.v+2C.k}$ calls to \proc{UpdateDistance}.
At each iteration of time $i$ in the main loop, the number of updates is at most
$2^{2C.v+2C.k}$. Therefore, the worst-case complexity of \proc{ComputeDistance} is
$g(t)=2^{C.v+2C.k}+2^{2C.v+2C.k}L=O(2^{2(C.v+C.k)}L)$.
In our experimental search results, we observe that the value of $L$ can be bounded by $L\le{}3v$ for any code.
The running time of the algorithm on a 3GHz CPU computer is less than 2ms.



\subsection{Search results}
\label{sec:search-results}
In this section, we list the good codes found by our randomization approach (Table~\ref{table:bpsk-qpsk-psk8}--\ref{table:qpsk-psk8-qam16-qam64}).
The lists are compiled for codes of constraint length up to $v=10$ and for each pair of original modulation $\K$ and target coded modulation $\N$
in \{BPSK, QPSK, 8-PSK, 16-QAM, 64-QAM\} whose constellations are shown
in~\fref{fig:constellations}. We note that previous work~\cite{tcm} discovered codes for only 1 bit constellation expansion. The asymptotic coding gain $\beta$ is measured in dB, and the generator matrix $G$
is presented in the octal form adopted from~\cite{ecc,conv-codes}. The symbol mapping of m-PSK constellations is
$p(s)=e^{j2\pi{}s/m}$, where $s$ is transmitted symbol, $j=\sqrt{-1}$.
For square m-QAM constellations, the mapping of $s$ is $p(s)=(x_0+s_{L}\Delta_m)+j(y_0+s_{H}\Delta_m)$,
where $\Delta_m$ is the minimum separation in m-QAM, $(x_0,y_0)$ are coordinates
of the zero symbol ($s=0$) located at the bottom-left corner of the constellation, and
$s_{L}=s\bmod{}(\sqrt{m})$, $s_{H}=(s-s_L)/\sqrt{m}$ are corresponding to
low-order and high-order bits of $s$.
Note that for some cases of short constraint length, there are no
good codes with positive boosting gain.

\begin{table}
\centering
{\scriptsize
\begin{tabular}{|c|c|c||c|c|c|}
\hline
\multicolumn{3}{|c||}{BPSK $\rightarrow$ QPSK} & \multicolumn{3}{c|}{BPSK $\rightarrow$ 8-PSK} \\
\hline
$v$  &   $\beta$  &        $G$       &    $v$  &   $\beta$  &        $G$     \\
\hline
1    &    1.76    &        (1 3)     &    2    &    3.72    &        (4 2 5) \\
2    &    3.98    &        (2 7)     &    3    &    4.77    &        (10 2 17) \\
3    &    4.77    &        (14 7)     &    4    &    5.36    &        (2 4 31) \\
4    &    5.44    &        (4 35)     &    5    &    6.02    &        (10 2 71) \\
5    &    6.02    &        (33 64)     &    6    &    6.53    &        (12 20 107) \\
6    &    6.99    &        (56 135)     &    7    &    6.99    &        (4 102 251) \\
7    &    6.99    &        (374 147)     &    8    &    7.24    &        (30 102 661) \\
8    &    7.78    &        (232 561)     &    9    &    7.63    &        (400 336 1715) \\
9    &    7.78    &        (665 1312)     &    10    &    7.78    &        (14 1400 3575) \\
10    &    8.45    &        (1256 2175)     &        &        &    \\
\hline
\end{tabular}
}
\caption{BPSK $\rightarrow$ QPSK/8-PSK TCM codes.}
\label{table:bpsk-qpsk-psk8}
\end{table}

\begin{table}
\centering
{\scriptsize
\begin{tabular}{|c|c|c||c|c|c|}
\hline
\multicolumn{3}{|c||}{BPSK $\rightarrow$ 16-QAM} & \multicolumn{3}{c|}{BPSK $\rightarrow$ 64-QAM} \\
\hline
$v$  &   $\beta$  &        $G$       &    $v$  &   $\beta$  &        $G$     \\
\hline
3    &    3.42    &        (10 5 4 17)     &    5    &    3.94    &        (2 4 47 40 10 63) \\
4    &    4.15    &        (2 27 10 33)     &    6    &    4.63    &        (100 104 55 40 10 117) \\
5    &    5.05    &        (4 55 10 67)     &    7    &    4.91    &        (210 44 371 60 4 227) \\
6    &    5.31    &        (2 167 100 31)     &    8    &    5.17    &        (362 100 745 52 12 573) \\
7    &    5.8    &        (40 327 20 212)     &    9    &    5.35    &        (10 60 1277 344 16 1747) \\
8    &    6.13    &        (200 757 14 725)     &    10    &    5.61    &        (6 62 3341 3436 12 2473) \\
9    &    6.33    &        (30 1137 346 1453)     &        &        &    \\
10    &    6.63    &        (774 3103 16 2653)     &        &        &    \\
\hline
\end{tabular}
}
\caption{BPSK $\rightarrow$ 16-QAM/64-QAM TCM codes.}
\label{table:bpsk-qam16-qam64}
\end{table}

\begin{table}
\centering
{\scriptsize
\begin{tabular}{|c|c|c|}
\hline
\multicolumn{3}{|c|}{8-PSK $\rightarrow$ 16-QAM} \\
\hline
$v$  &   $\beta$  &        $G$     \\
\hline
1    &    3.11    &        (0 1 0 1) (0 1 0 0) (1 0 3 1)     \\
2    &    4.36    &        (0 0 0 1) (1 3 0 1) (3 2 3 3)     \\
3    &    5.33    &        (0 1 0 1) (1 3 1 0) (7 1 5 3)     \\
4    &    6.12    &        (0 1 0 1) (3 2 1 4) (4 6 6 5)     \\
5    &    6.12    &        (0 1 0 2) (3 7 3 0) (2 2 4 5)     \\
6    &    6.79    &        (0 1 0 1) (13 13 17 16) (7 6 3 16)     \\
7    &    7.37    &        (0 1 0 1) (10 15 16 2) (17 11 17 23)    \\
8    &    7.37    &        (0 2 0 1) (3 17 5 15) (16 15 20 21)     \\
9    &    7.37    &        (0 1 0 1) (1 5 1 2) (332 10 336 77)     \\
10    &    7.37    &        (15 0 10 16) (15 10 16 13) (5 33 35 15)  \\
\hline\hline
\multicolumn{3}{|c|}{8-PSK $\rightarrow$ 64-QAM} \\
\hline
$v$  &   $\beta$  &        $G$     \\
\hline
3    &    4.15    &        (0 0 0 0 0 1) (0 2 1 3 3 2) (2 3 5 7 1 5) \\
4    &    4.66    &        (0 0 1 0 0 1) (0 1 1 0 0 2) (4 15 0 1 14 0) \\
5    &    5.12    &        (0 0 1 0 0 1) (0 0 1 0 1 3) (4 35 24 13 21 37) \\
6    &    5.53    &        (0 0 1 0 0 1) (0 2 1 0 2 2) (12 53 70 36 75 71) \\
7    &    5.73    &        (0 2 3 0 2 2) (0 2 1 0 2 3) (12 27 10 4 47 6) \\
8    &    5.91    &        (1 0 1 0 0 1) (2 0 0 0 0 3) (172 254 270 6 157 124) \\
9    &    6.26    &        (0 0 1 0 0 1) (0 2 1 0 3 5) (1 272 303 300 21 26) \\
10    &    6.42    &        (0 0 1 0 0 1) (0 2 1 0 6 6) (12 653 670 236 475 471) \\
\hline
\end{tabular}
}
\caption{8-PSK $\rightarrow$ 16-QAM/64-QAM TCM codes.}
\label{table:psk8-qam16-qam64}
\end{table}

\begin{table}
\centering
{\scriptsize
\begin{tabular}{|c|c|c|}
\hline
\multicolumn{3}{|c|}{16-QAM $\rightarrow$ 64-QAM} \\
\hline
$v$  &   $\beta$  &        $G$     \\
\hline
2    &    3.31    &    (0 0 0 0 0 1) (0 0 1 0 0 1) (2 3 1 1 3 1) (0 3 2 1 0 0) \\
3    &    3.31    &    (0 0 1 0 0 0) (2 3 2 1 0 2) (2 0 1 0 3 0) (2 3 1 1 2 1) \\
4    &    4.18    &    (0 0 1 0 0 0) (0 0 1 0 0 1) (0 3 0 0 2 1) (1 10 11 6 5 0) \\
5    &    4.56    &    (0 0 1 0 0 1) (0 0 1 0 0 0) (0 3 2 1 2 1) (14 24 17 4 3 10) \\
6    &    4.91    &    (0 0 0 0 0 1) (0 0 1 0 0 0) (0 3 1 1 2 0) (32 14 12 30 57 63) \\
7    &    5.23    &    (0 0 1 0 0 1) (0 0 1 0 0 0) (14 17 12 5 16 11) (14 24 17 4 3 10) \\
8    &    5.53    &    (0 0 0 0 0 1) (0 0 1 0 0 1) (26 27 5 15 37 31) (0 23 26 15 0 24) \\
9    &    5.81    &    (0 0 0 0 0 1) (0 0 1 0 0 1) (1 12 2 4 7 10) (55 25 123 42 177 73) \\
10    &    5.81    &    (0 0 0 0 0 1) (0 0 1 0 0 1) (1 6 6 2 7 4) (317 151 641 302 737 36) \\
\hline
\end{tabular}
}
\caption{16-QAM $\rightarrow$ 64-QAM TCM codes.}
\label{table:qam16-qam64}
\end{table}

\begin{table*}[ht]
\centering
{\scriptsize
\begin{tabular}{|c|c|c||c|c|c||c|c|c|}
\hline
\multicolumn{3}{|c||}{QPSK $\rightarrow$ 8-PSK} & \multicolumn{3}{c||}{QPSK $\rightarrow$ 16-QAM} & \multicolumn{3}{c|}{QPSK $\rightarrow$ 64-QAM} \\
\hline
$v$  &   $\beta$  &        $G$     &    $v$  &   $\beta$  &        $G$     &    $v$  &   $\beta$  &        $G$     \\
\hline
1    &    1.12    &    (0 0 1) (1 2 3) 	&	2    &    2.55    &    (1 2 2 1) (1 1 0 3) 	&	4    &    3.01    &    (6 6 1 6 4 12) (1 0 3 2 0 1) \\
2    &    3.01    &    (0 0 1) (2 5 1) 	&	3    &    3.42    &    (0 3 0 1) (6 0 2 5) 	&	5    &    3.41    &    (10 2 15 4 4 12) (4 0 1 2 0 5) \\
3    &    3.6    &    (0 1 2) (2 4 1) 	&	4    &    3.8    &    (0 5 0 2) (7 2 2 7) 	&	6    &    3.68    &    (40 2 72 76 30 75) (0 0 3 2 0 1) \\
4    &    4.13    &    (0 1 3) (6 10 13) &	5    &    4.15    &    (2 10 14 37) (0 3 2 1) &	7    &    3.94    &    (6 26 71 20 44 66) (0 1 3 0 0 5) \\
5    &    4.59    &    (4 17 13) (0 4 1) &	6    &    4.47    &    (1 12 16 11) (1 13 16 7) &	8    &    4.1    &    (0 4 3 6 7 15) (34 32 46 34 54 61) \\
6    &    5.01    &    (0 4 3) (12 1 20) &	7    &    5.05    &    (12 25 5 2) (0 4 0 17) 	&	9    &    4.26    &    (24 16 110 6 62 73) (4 0 7 4 2 12) \\
7    &    5.01    &    (10 75 30) (0 2 7) &	8    &    5.05    &    (15 3 12 25) (12 35 24 10) &	10    &    4.63    &    (12 2 7 4 0 17) (330 102 231 140 46 220) \\
8    &    5.75    &    (42 165 134) (0 6 1) &	9    &    5.56    &    (47 57 36 65) (1 22 2 13) &	&	&	\\
9    &    5.75    &    (0 1 6) (122 250 311) &	10    &    5.56    &    (5 31 30 23) (70 27 10 116) &	&	&	\\
10    &    6.02    &    (0 6 7) (376 227 763) &	&	&	&	&	&	\\
\hline
\end{tabular}
}
\caption{QPSK $\rightarrow$ 8-PSK/16-QAM/64-QAM TCM codes.}
\label{table:qpsk-psk8-qam16-qam64}
\end{table*}

\vskip 1ex\noindent{\bf Comparison with uniform codes search}
As seen from Table~\ref{table:qpsk-psk8-qam16-qam64} for QPSK $\rightarrow$ 8-PSK, we achieve
better codes than ones in~\cite{tcm} for constraint lengths $v=6,8,10$,
which confirms the intuition that better codes can be discovered
if the uniform mapping property given by the set partitioning rules is relaxed.

\vskip 1ex\noindent{\bf Comparison with full search}
To verify that the good codes discovered using randomization are actually the best for each category,
we perform a full search for some ``small'' tuples\footnote{
Larger values of $\K,\N$ or $v$ make full search run forever in our tests.} $(\K,\N,v)$
and compare the results with codes found by random search.
As we conjectured, the full search does not find any code better than the random search in those cases.
Yet, the random search is extremely fast (cf. Table~\ref{table:search-compare}).
The search results show that good codes are distributed randomly in the search space,
thus searching with randomization is very efficient,
and especially useful for large constraint length and high-order modulations.

\vskip 1ex\noindent{\bf Asymptotic coding gain}
The search results show that with a large enough constraint length there are codes such that
in addition to modulation hiding, the resiliency of the system can be boosted up to $8.5$dB
over uncoded systems, resulting an improvement of up to $10$dB compared to recent work~\cite{Rahbari:2014:FCM:2627393.2627415}.

\begin{table}
\centering
{\scriptsize
\begin{tabular}{|c|c|c|c|}
\hline
$\K$	&	$\N$	&	$T_{\textrm{full}} (s) $	&	$T_{\textrm{random}} (s)$	\\
\hline
BPSK & QPSK   & 1    & $\approx{}0$ \\
BPSK & 8-PSK  & 4    & $\approx{}0$ \\
BPSK & 16-QAM & 263  & 15 \\
QPSK & 8-PSK  & 57   & 44 \\
QPSK & 16-QAM & 7101 & 54 \\

\hline
\end{tabular}
}
\caption{Running time comparison between full search and random search for
codes of constraint length $v=5$.}
\label{table:search-compare}
\end{table}


\section{Cryptographic Interleaving}
\label{sec:cryptointl}
Although good TCM codes can improve the system robustness
and hide the modulation, the output symbols
produced by the codes are not indistinguishable to the adversary.
In this section, we propose
``Cryptographic Interleaving'' as a solution for the code concealing problem.
As suggested by the name, we use cryptographic functions for the concealing goal.
Different from conventional data encryption, cryptographic operations are performed on the baseband symbols
(otherwise it is vulnerable to constellation-based modulation-guessing attacks discussed in Section~\ref{sec:approach}).
We devise an efficient method for generating fast cryptographic interleaving functions.

\vskip 1ex\noindent{\bf Interleaving process}
The interleaving process is performed on coded symbols produced by the GTCM module before radio transmission.
The interleaved symbols must
be indistinguishable to the adversary. Specifically, the following requirements
should be met: (1) interleaved symbols look like a sequence produced by a random code,
(2) symbols belonging to different packets are differently permuted,
and (3) the user identity is not revealed. For compactness, we assume the existence
of a shared secret key associated to the undergoing communication session between the transmitter and receiver.
The symbol interleaving map is
randomly selected based on the session's secret key and the transmitted packet number.

For convenience, we assume that the coded symbols produced by the GTCM module
are divided into multiple blocks, each has $m$ symbols, and a fixed number of blocks are packed into a transmitted frame.
To randomly interleave the symbols,
the basic idea is to pre-build a set of interleaving tables
and select one to use based on the tuple information (user key $K$, packet number $s$, block $i$).
The mapping from the tuple $(K,s,i)$ to the interleaving table index can be created
by a hash function (e.g., SHA2).
For example, let $\{I_0,\ldots,I_{N-1}\}$ be the set of $N$ predefined
interleaving tables, and $h$ denote the hash function.
The interleaving table $I_j$ is selected by $j=h(|K|s|i|)\bmod{}N$ and used
to interleave the block $i$ of packet $s$.
As a result, the block $i$'s coded symbols $y_0,\ldots,y_{m-1}$ are permuted into
$y_{I_j(0)},\ldots,y_{I_j(m-1)}$ for transmission.

Although the precomputed tables approach is conceptually simple, it is hard to implement
in practice as one needs to precompute a large number of interleaving functions
so that the adversary cannot guess. This, however, requires significant computation
and storage resources. In the following, we propose a practical and
efficient solution based on a cryptographic linear construction of permutation/interleaving functions.

\vskip 1ex\noindent{\bf Cryptographic linear interleaving}
To eliminate the cost of precomputing interleaving tables, we assume
the number of symbols per block, $m$, is a prime, and define
the interleaving functions as linear functions:
$I_{A,B}(x)=Ax+B\bmod{}m$, where $A\in\{1\twodots{}m-1\},B\in\{0\twodots{}m-1\}$.
It is clear from the construction above that any pair $(A,B)$
will produce a bijective function $I_{A,B}(x)$ with respect to $x$,
i.e., $I_{A,B}(x)$ is an interleaving function.
To randomly generate $A,B$ such that the requirements for indistinguishability are met,
we apply a hash function $h$ on the tuple information $(K,s,i)$ and compute
$$
A=(h(|K|s|i|0)\bmod{}(m-1))+1, \quad{}   B=h(|K|s|i|1)\bmod{}m.
$$
The coded symbols $y_0,\ldots,y_{m-1}$ of block $i$ are
now permuted into $y_{I_{A,B}(0)},\ldots,y_{I_{A,B}(m-1)}$ for transmission.

\vskip 1ex\noindent{\bf Frame format}
Since the interleaving processing on the coded symbols of the user data involves
using not only the secret key $K$, but also the packet number $s$ and block index $i$,
the transmitter needs to embed those information along with the rate information
into the transmitted frame. In the following, we describe the structure of the
physical layer's frame and also discuss the encoding procedure for the frame header.

\begin{center}
\includegraphics[scale=0.32]{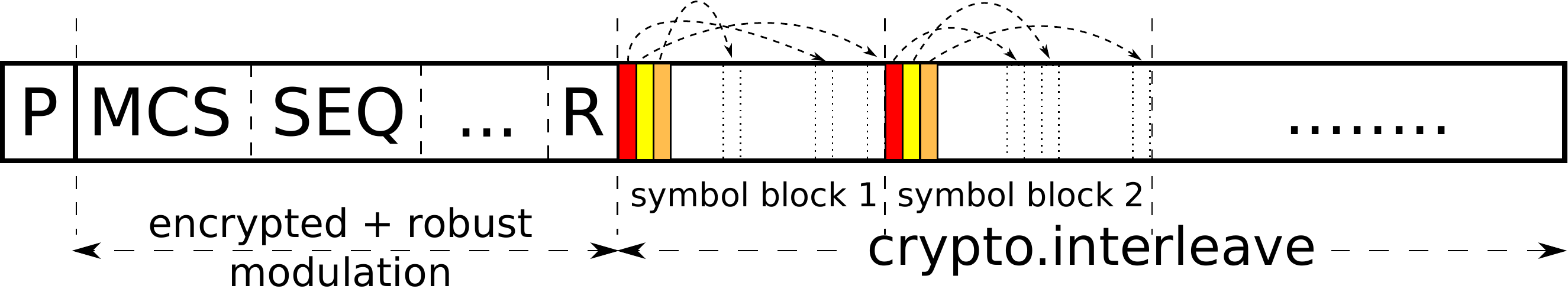}
\end{center}
In the transmitted frame, the preamble $P$ is a publicly known bit sequence (typically 64 bits)
used for frame synchronization at the receiver. The MCS (modulation and coding scheme) field stores
the TCM code's identifier used to encode the data. The SEQ field specifies the packet number
required for the interleaving process. The R field stores a random number generated per packet by the transmitter.
The frame header is encrypted by 
$E_K(\id{MCS}|\id{SEQ}|\ldots|R)$
using AES encryption $E$ with the shared secret key $K$.
The header is encoded by a public robust coding scheme and along with the preamble is modulated
by a public robust modulation. Note that since header and preamble are short, TCM codes are not beneficial
as Viterbi decoder is only applied for long sequences.

\vskip 1ex\noindent{\bf Security and robustness}
Since the interleaving functions are generated based on the cryptographic hash function
with a secret key $K$ applied on varying packet number $s$ and block $i$,
the coefficients $A$ and $B$ are indistinguishable under chosen plaintext attacks (semantically secure),
thus the interleaved symbol sequences are also indistinguishable. As the hash function
is lightweight, the computation of $I_{A,B}$ is extremely fast.
The header is also semantically secure due to the use of random $R$ with AES encryption.
It is also robust as lost synchronization does not propagate to next frames.

%

\section{Evaluation}
\label{sec:evaluation}

\begin{figure}[t]
\centering
\includegraphics[scale=0.4]{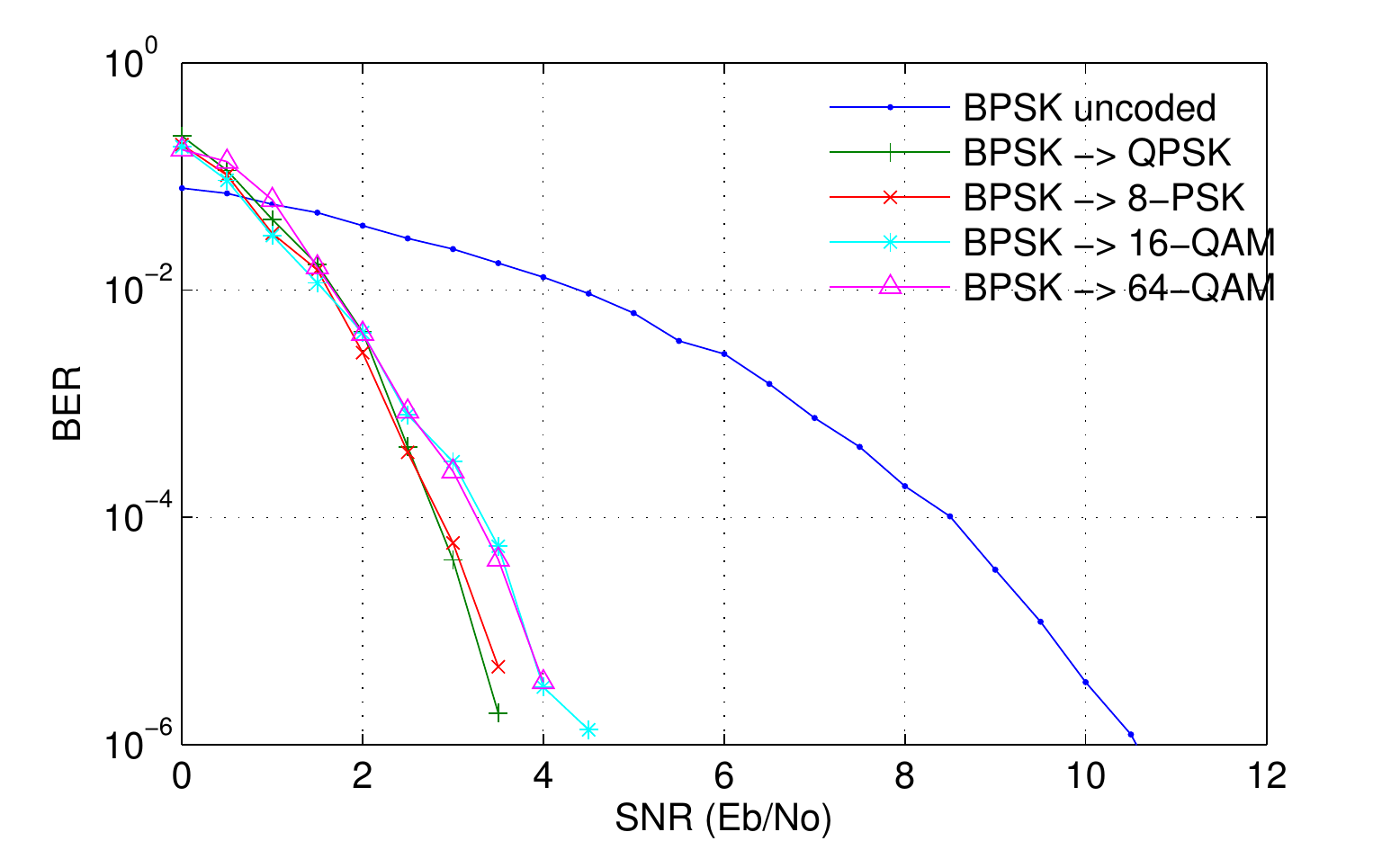}
\caption{Uncoded BPSK vs. coded higher-order modulations (v=10).}
\label{fig:bpsk}
\end{figure}

\begin{figure}[t]
\centering
\includegraphics[scale=0.4]{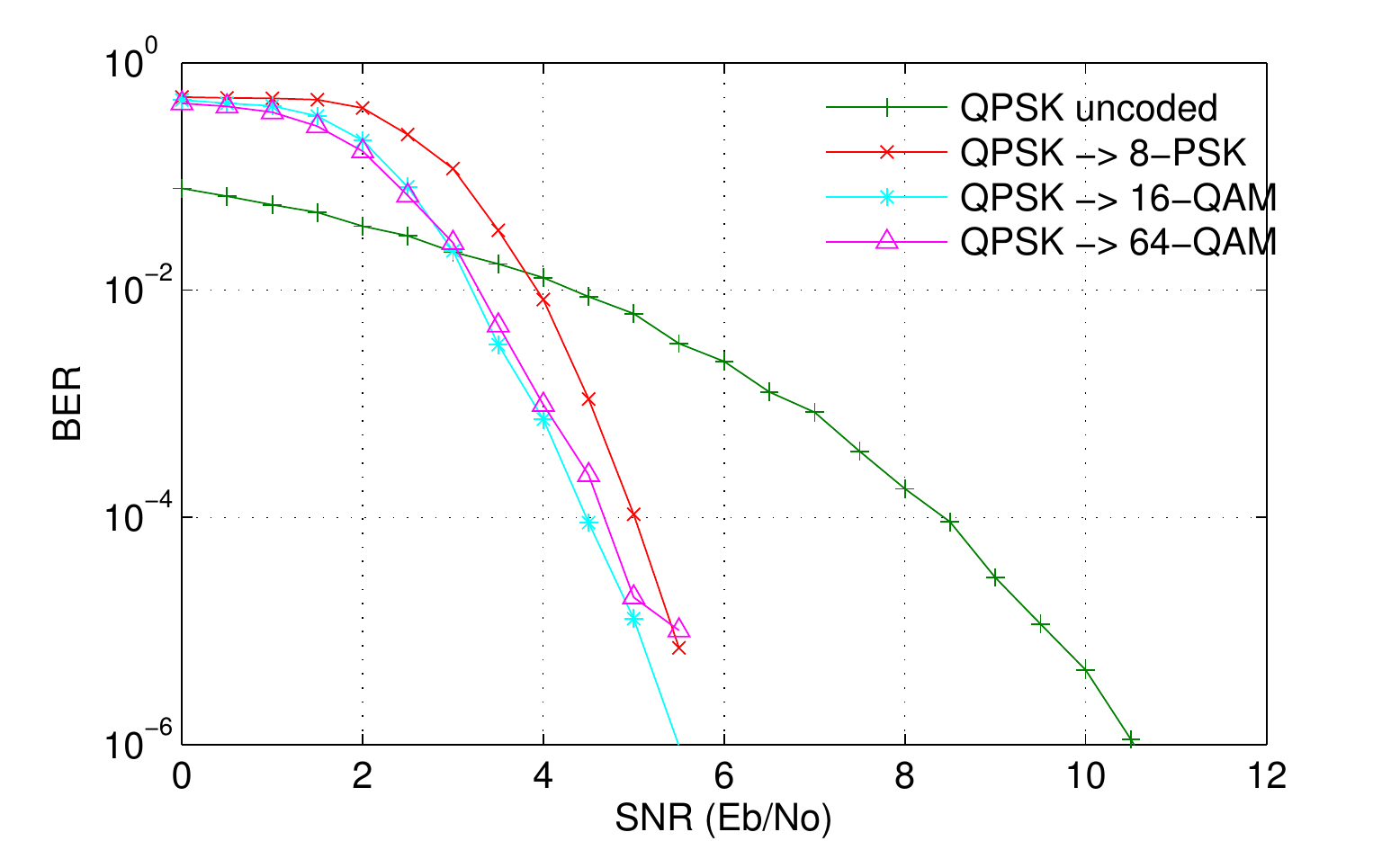}
\caption{Uncoded QPSK vs. coded higher-order modulations (v=10).}
\label{fig:qpsk}
\end{figure}

\begin{figure}[t]
\centering
\includegraphics[scale=0.4]{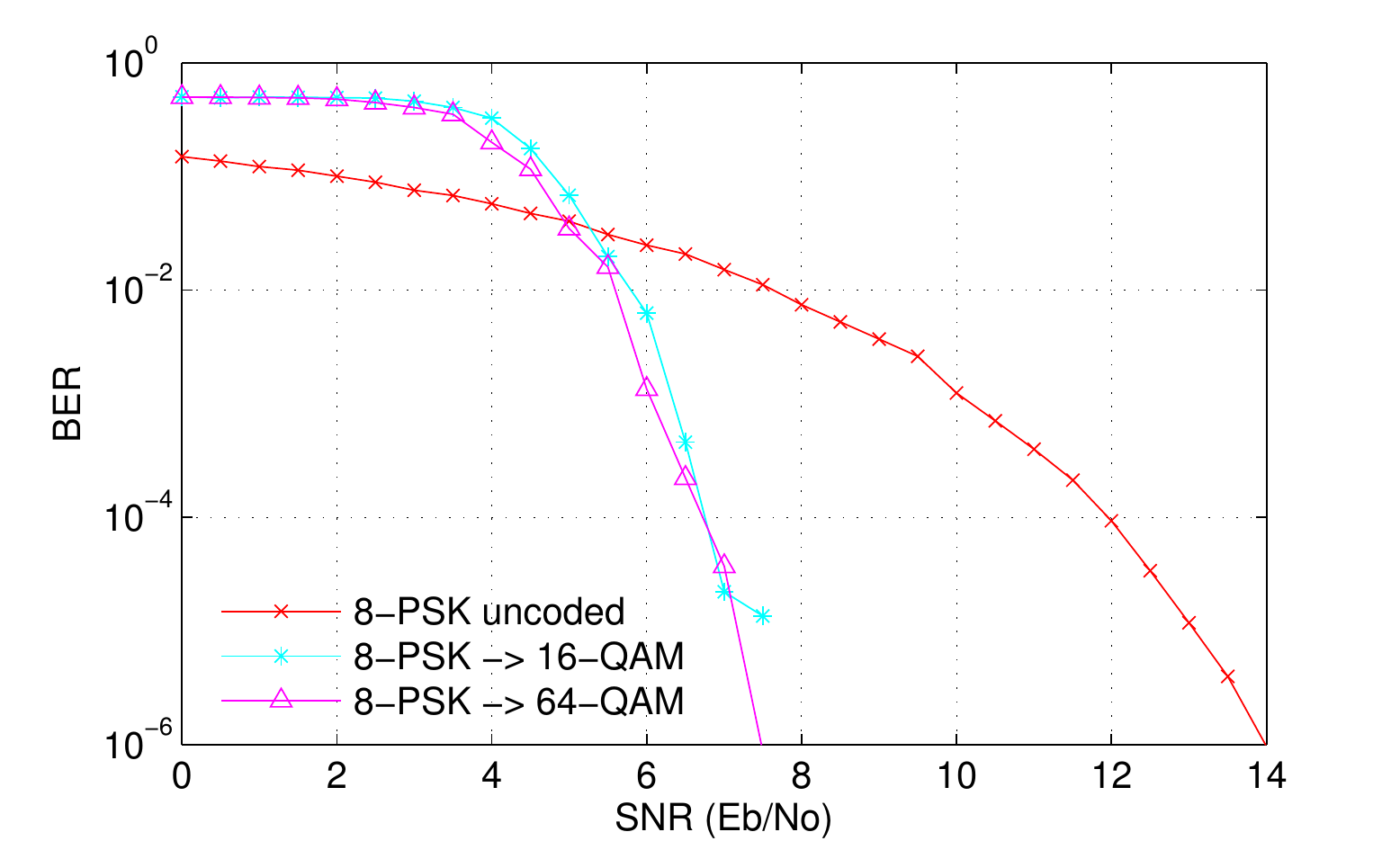}
\caption{Uncoded 8-PSK vs. coded higher-order modulations (v=10).}
\label{fig:psk8}
\end{figure}

\begin{figure}[t]
\centering
\includegraphics[scale=0.4]{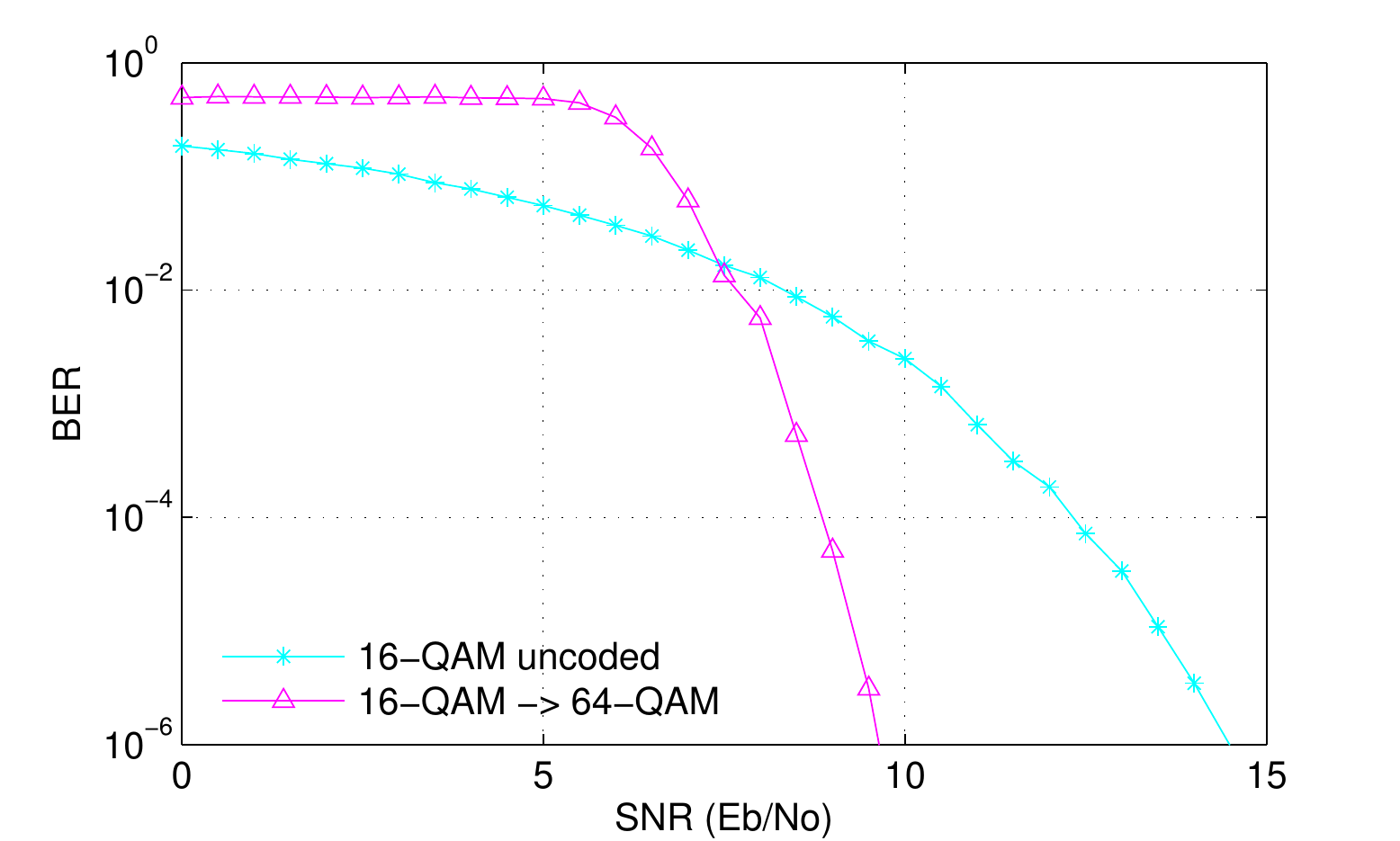}
\caption{Uncoded 16-QAM vs. coded 64-QAM (v=10).}
\label{fig:qam16}
\end{figure}

In this section, we report on the evaluation of the codes found in Section~\ref{sec:gtcm}, in comparison with their corresponding uncoded modulations. The evaluation is performed
through simulations in MatLab. For each pair of the original uncoded modulation $\K$ and
the target coded modulation $\N$, a transmission of 1Gbits is carried in an additive white Gaussian noise
channel. 
We start the simulation at the normalized signal-to-noise ratio $E_b/N_0=0$ and increase it by 0.5dB after every run.
The bit error rates corresponding to each SNR level are recorded.
The simulation stops when all data is done transmitted or 1000 bit errors are reached.

Due to lack of space, we only show the evaluation results for the codes
of constraint length $v=10$. At $\textrm{BER}=10^{-6}$, in addition to hiding the rate,
the coding gain (boosting) provided by the coded modulations ranges from about
$5$dB to more than $6.5$dB when 64-QAM modulation is used for rate concealing. Compared to the modulation unification technique proposed in recent related work~\cite{Rahbari:2014:FCM:2627393.2627415} whose performance degrades by about $1.2$dB
for hiding BPSK modulation in 64-QAM modulation, we gain up to $8$dB.
If the system only supports QPSK as the highest-order modulation, an upgrading BPSK $\rightarrow$ QPSK
can give an advantage of $7.5$dB over uncoded BPSK, while the modulation unification
loses about $2$dB, resulting in our improvement of up to $9.5$dB.
In scenarios where the adversary is weak (i.e., high SNR), 
the coding gain is close to the asymptotic gain about $8.5$dB presented
in Section~\ref{sec:gtcm}, resulting in an improvement of up to $10$dB.

The evaluation results also show that the performance boost is similar
across different target modulations. For example in~\fref{fig:qpsk},
using 8-PSK as the target modulation is within 1dB
of using target modulation 16-QAM or 64-QAM.
This leads to a key lesson that the rate concealing technique based on
coded modulations can be flexibly used in various systems, where different
modulations are supported. One can imagine that in future wireless communication systems always use the highest modulation possible for the RF Front End ADC and AGC, and adapt to the channel conditions only by changing the code.

\section{Conclusion and Discussion}

We proposed a solution to the problem of hiding the rate of a communication while simultaneously increasing the robustness of the communication to interference. Our approach relies on algorithms for discovering new General TCM codes, and a cryptographic interleaving scheme. These algorithms include new efficient techniques to determine the free distance of non-uniform TCM codes. We explicitly derived 85 codes for upgrading any modulation in \{BPSK, QPSK, 8-PSK, 16-QAM, 64-QAM\} into any higher order modulation. These are the best codes among uniform and non-uniform TCM codes specifically designed for coded modulation and that conceal the underlying rate of the communication. We demonstrate that beyond achieving rate-hiding, these codes also provide an order of magnitude improvement of energy efficiency in comparison with recent related work.
The proposed codes and cryptographic interleaving techniques are easily deployable in software defined radios since they only necessitate a small table per code. Beyond defending against rate adaptation attacks and boosting the performance of wireless systems at a time where RF spectrum is scarce, they also mitigate against passive attacks against users traffic analysis~\cite{AtkinsonARMM13}.

\bibliographystyle{IEEEtran}
\bibliography{references,gn-1}

\end{document}